\documentclass[11pt]{amsart}

\usepackage[english]{babel}
\usepackage[latin1]{inputenc}
\usepackage[a4paper,top=3cm,bottom=3cm,left=2.5cm,right=2.5cm, bindingoffset=5mm]{geometry}
\usepackage{mathpple}
\usepackage{newlfont}
\usepackage{amsmath}
\usepackage{graphicx}
\usepackage{amssymb}
\usepackage{amsmath}
\usepackage{latexsym}
\usepackage{amsthm}
\usepackage{stmaryrd}
\usepackage{braket}
\usepackage{enumerate}
\usepackage{booktabs}
\usepackage{mathtools}
\usepackage{array}
\usepackage{paralist}
\usepackage{cleveref}
\usepackage{color}

\usepackage{pxfonts}

\newtheorem{Theorem}{Theorem}

\theoremstyle{definition}
\newtheorem{Def}{Definition}
\newtheorem{Rem}[Theorem]{Remark}

\newtheorem*{problemg}{Generalized MinRank Problem}
\newtheorem*{problemc}{Classical MinRank Problem}

\DeclareMathOperator{\rank}{rank}

\DeclareMathOperator{\solvdeg}{solv.deg}

\DeclareMathOperator{\reg}{reg}

\DeclareMathOperator{\codim}{codim}


\newcommand{\kk}{\Bbbk}            


\title{The complexity of MinRank} 
\author{Alessio Caminata and Elisa Gorla}

\address{{\small Alessio Caminata, Dipartimento di Matematica, Universit\`a di Genova\\ via Dodecaneso 35, 16146, Genova, Italy}}
\email{{\small caminata@dima.unige.it}}

\address{{\small Elisa Gorla, Institut de Math\'{e}matiques, Universit\'{e} de Neuch\^{a}tel\\Rue Emile-Argand 11, CH-2000 Neuch\^{a}tel, Switzerland}}
\email{{\small elisa.gorla@unine.ch}}

\subjclass[2010]{Primary: 94A60, 13P10, 13P15, 13C40, 13P25.}

\keywords{MinRank Problem; minors; solving degree; Castelnuovo-Mumford regularity; Gr\"obner bases; multivariate cryptography; post-quantum cryptography.}

\begin{document}

\begin{abstract}
In this note, we leverage the results of~\cite{CG19} to produce a concise and rigorous proof for the complexity of the generalized MinRank Problem in the under-defined and well-defined case. Our main theorem recovers and extends the main results of \cite{FSS10,FSS13}.
\end{abstract}
	
\maketitle

\section{Introduction}\label{section-introduction}

The MinRank Problem asks to find an element of least rank in a given space of matrices. In its classical formulation, one searches for a matrix of minimum rank in a vector space, given via a system of generators.

\begin{problemc}
Let $\kk$ be a field and let $m,n,r,k$ be positive integers. Given as input $k$ matrices $M_1,\dots,M_k$ with entries in $\kk$, find $x_1,\dots,x_k\in\kk$ such that the corresponding linear combination satisfies
\[
\rank\left(\sum_{i=1}^kx_iM_i\right)\leq r.
\] 	
\end{problemc}

The entries of the matrix $M=\sum_{i=1}^kx_iM_i$ are linear polynomials in the variables $x_1,\dots,x_k$. The following is a natural generalization of the MinRank Problem.

\begin{problemg}
Let $\kk$ be a field and let $m,n,r,k$ be positive integers. Given as input a matrix $M$ with entries in $\kk[x_1,\dots,x_k]$, compute the set of points in $\kk^k$ at which the evaluation of $M$ has rank at most $r$.
\end{problemg}

Both of these problems arise naturally within cryptography and coding theory, as well as in numerous other applications. Within multivariate cryptography, the MinRank Problem plays a central role in the cryptanalysis of several systems, including HFE and its variants~\cite{KS99, BFP13, CSV17, VS17, DPPS18}, the TTM Cryptosystem~\cite{GC00}, and the ABC Cryptosystem~\cite{MSP14, MPS17}. Within coding theory, the problem of decoding a linear rank-metric code is always an instance of the MinRank Problem, and in some cases it can be modeled as a generalized MinRank Problem, where some entries of the matrix have degree greater than one, see e.g.~\cite{MGR08, GMR12}. Further applications of the generalized MinRank Problem to nonlinear computational geometry, real geometry and optimization, and other problems in symbolic computation are discussed in the introduction of ~\cite{FSS13}.

Following~\cite{KS99}, we distinguish the following three situations.

\begin{Def}
A MinRank Problem is \emph{under-defined} if $k>(n-r)(m-r)$, \emph{well-defined} if $k=(n-r)(m-r)$, and \emph{over-determined} if $k<(n-r)(m-r)$.	
\end{Def} 

\par There are at least three ways of approaching the MinRank Problem: the Kipnis-Shamir modeling~\cite{KS99}, the linear algebra search~\cite{GC00}, and the minors modeling. We concentrate on the latter. 
The minors modeling relies on the following observation:
A vector $(a_1,\dots,a_k)$ is a solution of the (classical or generalized) MinRank Problem for a matrix $M$ if and only if all minors of size $r+1$ of $M$ vanish at this point. Thus we can find the solutions of the generalized MinRank Problem by solving the polynomial system consisting of all minors of size $r+1$ of $M$.
This is a system of multivariate polynomial equations $\mathcal{F}=\{f_1,\dots,f_s \}$, so one may attempt to solve it by means of the usual Gr\"obner bases methods.
The complexity of these methods is controlled by the \emph{solving degree} of $\mathcal{F}$, that is the highest degree of polynomials appearing during the computation of a degree reverse lexicographic Gr\"obner basis of $\mathcal{F}$.

In this paper, we take another look at the complexity of solving the generalized MinRank Problem with the minors modeling. We focus on the under-defined and well-defined situations, which we treat with a unified approach. Notice that no fully provable, general results on the complexity of the over-defined case are currently available.

The results from~\cite{CG19}, in combination with classical commutative algebra results, provide us with a simple provable estimate for the complexity of the homogeneous version of the generalized MinRank Problem. 
As a special case of our main result, we obtain a simple and concise proof of the main results from~\cite{FSS10,FSS13}, which avoids lengthy technical computations.

\section{Main Results}\label{section-preliminaries}
We fix an infinite field $\kk$ and positive integers $m,n,r,k$. Without loss of generality, we assume that $n\geq m$ and $r<m$. We focus on the MinRank Problem in the under-defined and well-defined case. We state the results in increasing order of generality.

\begin{Theorem}[{{\cite[Corollary~4]{FSS10}}}]\label{thm:square}
	The solving degree of the minors modeling of a generic classical well-defined square MinRank Problem ($m=n$ and $k=(n-r)^2$) is upper bounded by
	\[
	\solvdeg(\mathcal{F})\leq nr-r^2+1. 
	\]  
\end{Theorem}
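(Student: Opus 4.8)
The plan is to combine two ingredients: the bound of the solving degree by the Castelnuovo--Mumford regularity proved in~\cite{CG19}, and the classical value of the regularity of a generic determinantal ring. Since the entries of $M=\sum_{i=1}^{k}x_iM_i$ are linear forms in $S:=\kk[x_1,\dots,x_k]$, the minors of size $r+1$ of $M$ are homogeneous of degree $r+1$ and generate a homogeneous ideal $I=I_{r+1}(M)\subseteq S$; in particular, no homogenization is needed. By~\cite{CG19} the solving degree of a homogeneous system is bounded above by the Castelnuovo--Mumford regularity of the ideal it generates (the genericity of the coordinates required there being guaranteed by the genericity of $M_1,\dots,M_k$), so it suffices to prove $\reg(I)\le nr-r^2+1$, equivalently $\reg(S/I)\le nr-r^2$.

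The key step is to recognize $S/I$ as an Artinian reduction of the universal determinantal ring. Let $R=\kk[z_{ij}\mid 1\le i,j\le n]$, let $Z=(z_{ij})$ be the generic $n\times n$ matrix, and let $J=I_{r+1}(Z)$. The substitution $z_{ij}\mapsto M_{ij}$ defines a graded $\kk$-algebra homomorphism $\pi\colon R\to S$; since the $M_i$ are generic and $k=(n-r)^2<n^2$, the map $\pi$ is surjective, $\ker\pi$ is generated by $c:=n^2-k=2nr-r^2$ general linear forms, and $S/I\cong (R/J)/\ker\pi$. By the Hochster--Eagon theorem $R/J$ is Cohen--Macaulay of dimension $n^2-(n-r)^2=2nr-r^2=c$; since a system of parameters of linear forms on a Cohen--Macaulay ring is a regular sequence, the $c$ general linear forms generating $\ker\pi$ form a maximal regular sequence on $R/J$, and the quotient is Artinian. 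This is where the well-defined hypothesis $k=(n-r)^2$ enters: it is precisely what makes the number of these linear forms equal to $\dim R/J$.

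Because Castelnuovo--Mumford regularity is unchanged upon passing to the quotient by a linear non-zerodivisor, iterating along the regular sequence gives $\reg(S/I)=\reg(R/J)$. It then remains to recall the invariants of determinantal rings: $R/J$ is Cohen--Macaulay with $a$-invariant $a(R/J)=-rn$, hence $\reg(R/J)=\dim(R/J)+a(R/J)=(2nr-r^2)-rn=nr-r^2$. Therefore $\reg(S/I)=nr-r^2$, so $\reg(I)=nr-r^2+1$, and $\solvdeg(\mathcal{F})\le nr-r^2+1$.

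There is no computational obstacle here --- indeed the virtue of this argument is that it replaces the technical estimates of~\cite{FSS10,FSS13} by a short chain of standard facts. The one step that must be set up carefully is the reduction in the second paragraph: that for generic $M_1,\dots,M_k$ the map $\pi$ is surjective and $\ker\pi$ is cut out by sufficiently general linear forms (so that, on the Cohen--Macaulay ring $R/J$, they form a regular sequence); this is a routine dimension count once $k\le n^2$. The remaining inputs are purely off-the-shelf: the solving degree/regularity inequality of~\cite{CG19}, the Cohen--Macaulayness and $a$-invariant of generic determinantal rings, and the invariance of regularity under a linear regular sequence. As a consistency check, when $r=n-1$ (so $k=1$) the system is, up to a scalar, $\{x_1^n\}$, with solving degree $n=nr-r^2+1$, so the bound is sharp.
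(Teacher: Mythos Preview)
Your proof is correct and follows essentially the same route as the paper. Both arguments bound the solving degree by the Castelnuovo--Mumford regularity via~\cite{CG19}, identify $S/I$ with a linear specialization of the generic determinantal ring $\kk[Z]/I_{r+1}(Z)$, use Cohen--Macaulayness (Eagon--Northcott/Hochster--Eagon) to transfer the regularity, and then read off $\reg=nr-r^2$ from the known $a$-invariant $a=-rn$. The only cosmetic difference is that the paper obtains this as the $d_{i,j}=1$, $m=n$ case of its more general Theorem~\ref{thm:main}, quoting \cite[Theorem~3.5]{BV88} and \cite[Corollary~1.5]{BH92} for the regularity transfer and the $a$-invariant, whereas you spell out the Artinian-reduction/regular-sequence argument directly; the substance is identical.
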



\begin{Theorem}[{{\cite[Lemma~18, Corollary~19, Lemma~22, Corollary~23]{FSS13}}}]\label{thm:degd}
	Let $M$ be an $m\times n$ matrix whose entries are generic homogeneous polynomials of degree $d$ in $\kk[x_1,\dots,x_k]$ and assume $k\geq(m-r)(n-r)$. Let $\mathcal{F}$ be the polynomial system of the minors of size $r+1$ of $M$. 
	Then the solving degree of $\mathcal{F}$ is upper bounded by
	\[
	\solvdeg(\mathcal{F})\leq (m-r)(nd-n+r)+1.
	\] 
\end{Theorem}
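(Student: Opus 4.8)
The plan is to follow the same route as for Theorems~\ref{thm:square} and~\ref{thm:linear}: bound $\solvdeg(\mathcal{F})$ by the Castelnuovo--Mumford regularity of the ideal it generates, and then compute that regularity via the commutative algebra of determinantal ideals. Since the entries of $M$ are homogeneous of degree $d$, the elements of $\mathcal{F}$ are homogeneous of degree $(r+1)d$, so $I:=I_{r+1}(M)$ is a homogeneous ideal of $S:=\kk[x_1,\dots,x_k]$; by \cite{CG19} we get $\solvdeg(\mathcal{F})\le\reg(I)=\reg(S/I)+1$, and it suffices to show
\[
\reg(S/I)=(m-r)(nd-n+r).
\]
I would first reduce to the well-defined case $k=(m-r)(n-r)$: once we know that $S/I$ is Cohen--Macaulay (see below), a generic linear form is a nonzerodivisor on $S/I$, cutting by it does not change the regularity, and the quotient is again the ideal of $(r+1)$-minors of a matrix of generic homogeneous polynomials of degree $d$ in one fewer variable. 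Iterating, we may assume $k=(m-r)(n-r)$, so that $S/I$ is Artinian and $\reg(S/I)$ is simply its top nonzero degree.

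The key structural input is generic perfection. Let $Y=(y_{ij})$ be the $m\times n$ matrix of indeterminates, with the grading $\deg y_{ij}=d$, and let $\varphi\colon\kk[Y]\to S$ be the graded homomorphism sending $y_{ij}$ to the $(i,j)$ entry of $M$, so that $I=\varphi(I_{r+1}(Y))S$. The determinantal ideal $I_{r+1}(Y)$ is generically perfect of grade $c:=(m-r)(n-r)$ (Eagon--Northcott; see also Bruns--Vetter), and for \emph{generic} homogeneous entries of degree $d\ge 1$ one has $\operatorname{grade}_S(I)=c$ whenever $k\ge c$. By the theorem on generic perfection, applying $\varphi$ to the minimal graded free resolution of $\kk[Y]/I_{r+1}(Y)$ over $\kk[Y]$ produces a minimal graded free resolution of $S/I$ over $S$: it remains exact because the grade is preserved, and it remains minimal because $\varphi$ maps the maximal ideal of $\kk[Y]$ into that of $S$. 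In particular $S/I$ is Cohen--Macaulay of codimension $c$ (as used above), and its graded Betti numbers are exactly those of $\kk[Y]/I_{r+1}(Y)$ with every internal degree multiplied by $d$.

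Finally I would read off the regularity from the Hilbert series. Let $h(z)$ be the numerator of the Hilbert series of $\kk[Y]/I_{r+1}(Y)$ in lowest terms; since this ring is Cohen--Macaulay, $\deg h(z)=\reg(\kk[Y]/I_{r+1}(Y))=r(m-r)$ (equivalently, its $a$-invariant is $-rn$). This is classical and can be extracted from the Eagon--Northcott resolution for maximal minors, from the Gulliksen--Neg\aa rd resolution for submaximal minors of a square matrix, and from the Lascoux resolution in general. Because the Betti numbers of $S/I$ are those of $\kk[Y]/I_{r+1}(Y)$ with internal degrees scaled by $d$, and $\kk[Y]/I_{r+1}(Y)$ is Cohen--Macaulay, the Hilbert series of $S/I$ in the well-defined case $k=c$ equals
\[
\frac{h(z^{d})(1-z^{d})^{c}}{(1-z)^{c}}=h(z^{d})(1+z+\cdots+z^{d-1})^{c},
\]
a polynomial since $S/I$ is Artinian. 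Hence
\[
\reg(S/I)=d\,r(m-r)+(d-1)(m-r)(n-r)=(m-r)(nd-n+r),
\]
and therefore $\solvdeg(\mathcal{F})\le\reg(I)=(m-r)(nd-n+r)+1$.

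I expect the main obstacle to be the genericity statement used in the second step, namely that a matrix of \emph{generic} homogeneous polynomials of degree $d$, as opposed to the matrix of indeterminates, has $(r+1)$-minors of the maximal grade $(m-r)(n-r)$ as soon as $k\ge(m-r)(n-r)$; the Cohen--Macaulayness and the Betti-number comparison both hinge on it. Everything else is standard commutative algebra together with the known regularity $r(m-r)$ of the generic determinantal ring. Alternatively, one could skip the reduction to the well-defined case and argue instead that the minimal free resolution of $S/I$ ends in internal degree $d\cdot n(m-r)$, and then use $\reg(R)=t_c-c$ for a Cohen--Macaulay quotient $R=S/I$ of codimension $c$ whose last free module is generated in degree $t_c$ --- a consequence of graded local duality applied to $\omega_R=\operatorname{Ext}^{c}_S(R,\omega_S)$.
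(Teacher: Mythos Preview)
Your proof is correct and follows the same route as the paper: bound $\solvdeg(\mathcal{F})$ by $\reg(I)$ via \cite{CG19}, use generic perfection (Eagon--Northcott) to identify the graded Betti numbers of $S/I$ with those of the generic determinantal ring, and then read off the regularity. The only difference is bookkeeping---the paper obtains Theorem~\ref{thm:degd} by specializing the more general Theorem~\ref{thm:main} and computes the regularity via the $a$-invariant formula of \cite{BH92} for the weighted determinantal ring, whereas you work directly in the equal-degree case and extract the regularity by scaling the $h$-polynomial by $d$ (your reduction to the Artinian well-defined case is, as you yourself note, unnecessary since Cohen--Macaulayness already gives $\reg(S/I)=\deg\bigl(h(z^{d})(1+\cdots+z^{d-1})^{c}\bigr)$ for every $k\ge c$).
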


\par The previous theorems recover the main results of \cite{FSS10,FSS13}. We obtain them as a consequence of our  more general Theorem~\ref{thm:main}, by letting $m=n$ and $d_{i,j}=1$ (Theorem~\ref{thm:square}), or $d_{i,j}=d$ (Theorem~\ref{thm:degd}).

We consider an $m\times n$ matrix  $M$, whose entry in position $(i,j)$ is a polynomial of degree $d_{i,j}$ in $\kk[x_1,\dots,x_k]$, for all $i,j$. Up to permuting the rows of $M$, we may assume that $d_{1,1}\leq d_{2,1}\leq\cdots\leq d_{m,1}$. Moreover, assume that the following two conditions hold:
\begin{compactenum}
	\item $d_{i,j}>0$ for all $i,j$.
	\item $d_{i,j}+d_{h,\ell}=d_{i,\ell}+d_{h,j}$ for all $i,j,\ell,h$.
\end{compactenum}	
Finally, we assume that the entries of $M$ are generic polynomials. One may think of this assumption as the coefficients of each polynomial being randomly chosen.

\begin{Theorem}\label{thm:main}
	Let $M$ be an $m\times n$ matrix as above and assume $k\geq(m-r)(n-r)$. Let $\mathcal{F}$ be the polynomial system of the minors of size $r+1$ of $M$. 
	Then the solving degree of $\mathcal{F}$ is upper bounded by
	\[
	\solvdeg(\mathcal{F})\leq (m-r)\sum_{i=1}^r d_{i,i}+\sum_{i=r+1}^m\sum_{j=r+1}^n d_{i,j}-(m-r)(n-r)+1.
	\] 
\end{Theorem}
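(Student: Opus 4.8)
The plan is to reduce the estimate to a Castelnuovo--Mumford regularity computation via \cite{CG19}, and then to determine that regularity from the classical structure theory of generic determinantal ideals (their free resolutions and canonical modules).

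\emph{Step 1 (reduction to a regularity bound).} First I would homogenize. Let $\tilde M$ be the $m\times n$ matrix over $\tilde S:=\kk[x_0,x_1,\dots,x_k]$ whose entries are the homogenizations, with respect to a new variable $x_0$, of those of $M$; the entry $\tilde M_{i,j}$ is a generic homogeneous polynomial of degree $d_{i,j}$. After the prescribed ordering of the rows, Condition~(2) says exactly that $d_{i,j}=a_i+b_j$ for integers $a_1\le\dots\le a_m$ and $b_1,\dots,b_n$; writing the $(r{+}1)$-minor of $\tilde M$ on rows $I$ and columns $J$ as $\bigl(\prod_{i\in I}x_0^{a_i}\bigr)\bigl(\prod_{j\in J}x_0^{b_j}\bigr)\cdot\delta_{I,J}(x_1/x_0,\dots,x_k/x_0)$, where $\delta_{I,J}$ is the corresponding minor of $M$, one sees that it is homogeneous and equals the homogenization of $\delta_{I,J}$ (the only point being that $\delta_{I,J}$ has the expected degree $\sum_{i\in I}a_i+\sum_{j\in J}b_j$, which holds by genericity). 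Hence the ideal generated by the homogenized equations is $I_{r+1}(\tilde M)$. Since the entries of $\tilde M$ are generic homogeneous forms of positive degrees and $k+1>(m-r)(n-r)$, the ideal $I_{r+1}(\tilde M)$ is a generic determinantal ideal of codimension $(m-r)(n-r)$, and $\tilde S/I_{r+1}(\tilde M)$ is Cohen--Macaulay of dimension $\ge 1$; in particular $x_0$ is a nonzerodivisor on it, so the main result of \cite{CG19} applies and yields
\[
\solvdeg(\mathcal F)\ \le\ \reg\bigl(I_{r+1}(\tilde M)\bigr)\ =\ \reg\bigl(\tilde S/I_{r+1}(\tilde M)\bigr)+1 .
\]
It then remains to show that $\reg\bigl(\tilde S/I_{r+1}(\tilde M)\bigr)\le(m-r)\sum_{i=1}^r d_{i,i}+\sum_{i=r+1}^m\sum_{j=r+1}^n d_{i,j}-(m-r)(n-r)$.

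\emph{Step 2 (passing to the generic determinantal ring).} Next I would compare with the generic matrix. Let $T=\kk[y_{ij}]$ be the polynomial ring on an $m\times n$ generic matrix, graded by $\deg y_{ij}=d_{i,j}$; by Condition~(2) every $(r{+}1)$-minor of $(y_{ij})$ is homogeneous for this grading, so $A:=T/I_{r+1}\bigl((y_{ij})\bigr)$ carries a graded minimal free resolution --- the Eagon--Northcott complex when $r+1=m$, and the Lascoux (Akin--Buchsbaum--Weyman) resolution in general --- of length $(m-r)(n-r)$, since $A$ is Cohen--Macaulay. The substitution $y_{ij}\mapsto\tilde M_{i,j}$ is a degree-zero graded homomorphism $T\to\tilde S$ that preserves the codimension of $I_{r+1}$, so by the generic perfection theorem its base change carries the resolution of $A$ to a graded free resolution of $\tilde S/I_{r+1}(\tilde M)$ with the same twists. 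Hence $\reg\bigl(\tilde S/I_{r+1}(\tilde M)\bigr)\le\max_i(t_i-i)$, where $t_i$ is the top degree of a generator of the $i$-th free module in the resolution of $A$, in the grading $\deg y_{ij}=d_{i,j}$.

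\emph{Step 3 (the regularity of $A$).} What remains is to compute $\max_i(t_i-i)$. Since $A$ is Cohen--Macaulay, this equals its $a$-invariant plus its dimension in the given grading, which one extracts from the known description of the canonical module of a generic determinantal ring together with the shift $\sum_{i,j}d_{i,j}$ of the canonical module of $T$; using $d_{i,j}=a_i+b_j$ with $a_1\le\dots\le a_m$, the relevant (top, in homological degree $(m-r)(n-r)$) twist of the resolution of $A$ comes out to be $(m-r)\sum_{i=1}^r a_i+(n-r)\sum_{i=r+1}^m a_i+(m-r)\sum_{j=1}^n b_j$, and a direct rewriting turns $\max_i(t_i-i)$ into the expression displayed at the end of Step 1. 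This is the main obstacle: one has to pin down the relevant extremal twist of the determinantal resolution in the weighted grading and verify that the maximum of $t_i-i$ is attained at the end of the resolution --- and it is here that the prescribed ordering of the rows of $M$ enters. Unlike the Hilbert-function analysis of \cite{FSS10,FSS13}, however, this amounts to a single bounded computation with a classically understood resolution.
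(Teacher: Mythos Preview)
Your proposal is correct and follows essentially the same route as the paper: bound the solving degree by the Castelnuovo--Mumford regularity via \cite{CG19}, transfer the computation to the generic determinantal ring in the weighted grading $\deg y_{ij}=d_{i,j}$, and read off the regularity from the $a$-invariant. The only difference is packaging: where you invoke generic perfection and propose to extract the top twist of the Lascoux/ABW resolution by hand (flagging the verification that $\max_i(t_i-i)$ is attained at the end as the main obstacle), the paper simply cites \cite[Theorem~3.5]{BV88} for $\reg_{\kk[x_1,\dots,x_k]}(S)=\reg_{\kk[X]}(T)$, \cite[Examples~3.6.15]{BH98} for $\reg(T)=a(T)-a(\kk[X])-(m-r)(n-r)$, and \cite[Corollary~1.5]{BH92} for the explicit value of $a(T)$, thereby bypassing that obstacle entirely.
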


\begin{proof}
	Under our assumptions, the homogenizations of the $(r+1)$-minors of $M$ are the $(r+1)$-minors of the matrix obtained from $M$ by homogenizing its entries. Therefore, we may assume without loss of generality that the entries of $M$ are generic homogeneous polynomials.
The main result of \cite[Section~3.3]{CG19} implies that 
\[
\solvdeg(\mathcal{F})\leq\reg I,
\]
where $I$ is the ideal generated by the polynomials of $\mathcal{F}$ and $\reg I$ denotes the Castelnuovo-Mumford regularity of $I$.
We can compute it as follows. 
\par First, since the polynomials of $M$ are generic and the matrix $M$ is homogeneous, by combining Eagon-Northcott's Theorem~\cite[Theorem~3]{EN62} with \cite[Theorem~2.5]{BV88} one obtains that the quotient ring $S=\kk[x_1,\dots,x_k]/I$ is Cohen-Macaulay and the ideal $I$ has codimension $\codim(I)=(m-r)(n-r)$. Recall that the codimension of a homogeneous ideal in a polynomial ring $\kk[x_1,\dots,x_k]$ is the difference between $k$ and the Krull dimension of the quotient of the polynomial ring by the ideal.

Now consider the quotient ring $T=\kk[X]/I_{r+1}(X)$, where $X=(x_{i,j})$ is a matrix of size $m\times n$ whose entries are distinct variables, $\deg(x_{i,j})=d_{i,j}$, $\kk[X]$ is the polynomial ring over $\kk$ with variables the entries of $X$, and $I_{r+1}(X)$ denotes the ideal generated by the minors of size $r+1$ of $X$. By~\cite[Corollary~4]{HE71} $\codim(I_{r+1}(X))=(m-r)(n-r)$, see also~\cite[Theorem~3.7.1]{BH98}.

Since $\codim(I)=\codim(I_{r+1}(X))$, by~\cite[Theorem~3.5]{BV88} a minimal graded free resolution of $S$ is obtained from a minimal graded free resolution of $T$ by substituting $x_{i,j}$ with the entry of $M$ in position $(i,j)$, for all $i$ and $j$. In particular
\[
\reg_{\kk[x_1,\ldots,x_k]}(S)=\reg_{\kk[X]}(T),
\]
where $\reg(S)=\reg(I)-1$ and $\reg(T)=\reg (I_{r+1}(X))-1$.
Moreover, since $T$ is Cohen-Macaulay, we can express its regularity in terms of its $a$-invariant  (see \cite[Definition~3.6.13]{BH98}) and of the codimension of $I_{r+1}(X)$. 
We have $$\reg(T)=a(T)-a(\kk[X])-\codim(I_{r+1}(X))=a(T)+\sum_{i=1}^m\sum_{j=1}^n d_{i,j}-(m-r)(n-r),$$ where $a$ denotes the $a$-invariant, the first equality follows from~\cite[Examples 3.6.15 b)]{BH98}, and the second from~\cite[Examples 3.6.15 a)]{BH98} and $\codim(I_{r+1}(X))=(m-r)(n-r)$.
By~\cite[Corollary 1.5]{BH92} 
\[
a(T)=-r\sum_{i=1}^md_{i,i}-\sum_{i=1}^r\sum_{j=m+1}^nd_{i,j},
\]
where $d_{i,j}=e_i+f_j$ in the notation of \cite{BH92}.
Putting everything together we obtain
\[\begin{aligned}
	\reg(I)&=\reg(S)+1=a(T)+\sum_{i=1}^m\sum_{j=1}^n d_{i,j}-(m-r)(n-r)+1
	\\&=(m-r)\sum_{i=1}^r d_{i,i}+\sum_{i=r+1}^m\sum_{j=r+1}^n d_{i,j}-(m-r)(n-r)+1,
\end{aligned}\]
which proves the statement.
\end{proof}

\begin{Rem}
	Theorem~\ref{thm:main} analyzes the under-defined and well-defined situations.
	In the over-defined situation, assume that $k$ is sufficiently small and that $d_{i,j}=1$ for all $i$ and $j$. Then the minors of size $r+1$ of $M$ generate the maximal ideal to the power $r+1$. In particular, $$\solvdeg(\mathcal{F})=r+1.$$
\end{Rem}

\begin{Rem}\label{finitefield}
	The word ``generic'' used in the statements is a technical term from algebraic geometry, which means ``there exists a nonempty open set'' of polynomials for which the result holds. This is exactly the same use of generic as in~\cite{FSS10,FSS13}. We stress that the genericity assumption is often essential to a type of approach that uses algebraic geometry. To the extent of our knowledge, this assumption appears also in all the previous works that use similar methods. 
	
	Usually one thinks of a generic property as a property that holds for ``almost every point'' of the ambient space. In order for this intuition to be true, however, one needs to work over an infinite field, or at least over a large enough field extension of $\kk$ (if $\kk$ is a finite field). In fact, a nonempty open set over an infinite field may contain only a few points, or even no point, over a given finite subfield. 
	
	One may therefore be lead to think that theorems with a genericity assumption are of little use over finite fields. This is however not the case. In fact, if an open set is nonempty over the algebraic closure, then it will contain most points over a large enough (but finite) field extension of $\kk$. Therefore, if we are willing to take a field extension, we have that a generic property holds for most points.
	
	In addition, any open set  
	is defined by a finite number of conditions. Whenever one can explicitly describe them, 
	one can check whether any given point (including points over any finite field) satisfies them, which is equivalent to checking whether the point belongs to the open set. These conditions may always be expressed as a set of polynomial equations which should not all vanish on the point in question. Sometimes, when the polynomials are difficult to describe explicitly or involve a large number of terms, one may choose to describe the conditions as equivalent properties that can be checked directly. E.g., in the proof of Theorem~\ref{thm:main}, for any minor of the matrix $M$ one can check whether the homogenization of the minor is equal to the corresponding minor of the matrix obtained from $M$ by homogenizing its entries. This condition can be expressed also as a polynomial in the coefficients of the entries of $M$, namely the condition on the homogenization holds if and only if the polynomial does not vanish on the coefficients of the entries of $M$. In particular, whenever we are able to explicitly state the genericity conditions, one can directly check whether a given system of equations satisfies the genericity properties, independently of the field of definition (which can also have small cardinality).
	\phantom\qedhere
\end{Rem}

In the next theorem we explicitly state the genericity conditions of Theorem~\ref{thm:main}, so that they can be checked directly over any finite field. This provides a version of Theorem~\ref{thm:main} over finite fields.

\begin{Theorem}
	Let $\kk$ be a finite field.
	Let $M$ be an $m\times n$ matrix whose entry in position $(i,j)$ is a polynomial of degree $d_{i,j}>0$ in $\kk[x_1,\dots,x_k]$, for all $i,j$. Assume that $k\geq(m-r)(n-r)$, $d_{1,1}\leq d_{2,1}\leq\cdots\leq d_{m,1}$, and $d_{i,j}+d_{h,\ell}=d_{i,\ell}+d_{h,j}$ for all $i,j,\ell,h$.
	Let $\mathcal{F}$ be the polynomial system of the minors of size $r+1$ of $M$. 
	Let $t$ be a new variable, let $M^h$ be the matrix obtained from $M$ by homogenizing its entries with respect to $t$, and let $J=I_{r+1}(M^h)$.
	Suppose that $\codim(J)=(m-r)(n-r)$, that $t\nmid 0$ modulo $J$, and that the homogenization with respect to $t$ of each $(r+1)$-minor of $M$ equals the corresponding $(r+1)$-minor of $M^h$.
	Then the solving degree of $\mathcal{F}$ is upper bounded by
	\[
	\solvdeg(\mathcal{F})\leq (m-r)\sum_{i=1}^r d_{i,i}+\sum_{i=r+1}^m\sum_{j=r+1}^n d_{i,j}-(m-r)(n-r)+1.
	\] 
\end{Theorem}

\section*{Acknowledgements}
	 We are grateful to an anonymous referee for a detailed reading and comments which helped us improve the clarity of the proof of the main theorem.

\label{lastpage}

\begin{thebibliography}{99}
{\small
	
\bibitem[BFP13]{BFP13} {\sc Luk Bettale, Jean-Charles Faug\`ere, Ludovic Perret}, \textit{Cryptanalysis of HFE, multi-HFE and variants for odd and even characteristic}, Des. Codes Cryptogr. vol. 69, no. 1, 1--52, 2013.
	
\bibitem[BH92]{BH92} {\sc Winfried Bruns, J\"urgen Herzog}, \textit{On the computation of a-invariants}, Manuscripta Mathematica vol. 77, pp. 201--213, 1992.
		
\bibitem[BH98]{BH98} {\sc Winfried Bruns, J\"urgen Herzog}, \textit{Cohen-Macaulay rings. Revised edition}, Cambridge Studies in Advanced Mathematics, vol. 39, Cambridge University Press, 1998.

\bibitem[BV88]{BV88} {\sc Winfried Bruns, Udo Vetter}, \textit{Determinantal Rings}, Lecture Notes in Mathematics, 1327, Springer-Verlag, Berlin, 1988.

\bibitem[CSV17]{CSV17} {\sc Daniel Cabarcas, Daniel Smith-Tone, Javier A. Verbel}, \textit{Key Recovery Attack for ZHFE}, Post-quantum cryptography, 289--308, Lecture Notes in Comput. Sci., 10346, Springer, Cham, 2017.

		
\bibitem[CG21]{CG19} {\sc Alessio Caminata, Elisa Gorla}, \textit{Solving Multivariate Polynomial Systems and an Invariant from Commutative Algebra}, In: Proceedings of Arithmetic of Finite Fields, 8th International Workshop, J.C. Bajard and A. Topuzoglu Eds, Lecture Notes in Computer Science, 12542 LNCS, pp. 3--36, Springer, 2021.

\bibitem[DPPS18]{DPPS18}  {\sc Jintai Ding, Ray Perlner, Albrecht Petzoldt, Daniel Smith-Tone}, \textit{Improved cryptanalysis of $HFEv^-$ via projection}, Post-quantum cryptography, 375--395, Lecture Notes in Comput. Sci., 10786, Springer, Cham, 2018.

\bibitem[EN62]{EN62} {\sc John A. Eagon, Douglas G. Northcott}, \textit{Ideals Defined by Matrices and a Certain Complex Associated with Them}, Proceedings of the Royal Society of London. Series A, Mathematical and Physical Sciences, vol. 269, n. 1337, pp. 188--204 , 1962.        

\bibitem[FSS10]{FSS10} {\sc Jean-Charles Faug\`ere, Mohab Safey El Din, Pierre-Jean Spaenlehauer}, \textit{Computing Loci of Rank Defects of Linear Matrices using Gr\"obner Bases and Applications to Cryptology}, Proceedings of the 2010 International Symposium on Symbolic and Algebraic Computation, ISSAC '10, pp. 257--264, Munich, Germany, 2010.
		
\bibitem[FSS13]{FSS13} {\sc Jean-Charles Faug\`ere, Mohab Safey El Din, Pierre-Jean Spaenlehauer}, \textit{On the Complexity of the Generalized MinRank Problem},
Journal of Symbolic Computation, vol. 55, pp. 30--58, 2013. 

\bibitem[GMR12]{GMR12} {\sc Elisa Gorla, Felice Manganiello, Joachim Rosenthal}, \textit{An algebraic approach for decoding spread codes}, Advances in Mathematics of Communications, vol. 6, n. 4, pp. 443--466, 2012.
		
\bibitem[GC00]{GC00} {\sc Louis Goubin, Nicolas T. Courtois}, \textit{Cryptanalysis of the TTM Cryptosystem}, Advances in Cryptology, Proceedings of ASIACRYPT 2000, Lecture Notes in Computer Science, vol. 1976, Springer-Verlag,  pp. 44--57, 2000.


\bibitem[HE71]{HE71} {\sc Melvin Hochster, John A. Eagon}, \textit{Cohen-Macaulay Rings, Invariant Theory, and the Generic Perfection of Determinantal Loci}, American Journal of Mathematics, vol. 93, n. 4, pp. 1020--1058, 1971.
			
\bibitem[KS99]{KS99} {\sc Aviad Kipnis, Adi Shamir}, \textit{Cryptanalysis of the HFE public key cryptosystem}, Advances in Cryptology, Proceedings of Crypto '99, LNCS no. 1666, Springer-Verlag, pp. 19--30, 1999.

\bibitem[MGR08]{MGR08} {\sc Felice Manganiello, Elisa Gorla, Joachim Rosenthal}, \textit{Spread codes and spread decoding in network coding}, Proceedings of the IEEE International Symposium on Information Theory -- ISIT, 881--885, 2008.

\bibitem[MPS14]{MSP14} {\sc Dustin Moody, Ray Perlner, Daniel Smith-Tone}, \textit{An asymptotically optimal structural attack on the ABC multivariate encryption scheme}, Post-quantum cryptography, 180--196, Lecture Notes in Comput. Sci., 8772, Springer, Cham, 2014.

\bibitem[MPS17]{MPS17} {\sc Dustin Moody, Ray Perlner, Daniel Smith-Tone}, \textit{Improved attacks for characteristic-2 parameters of the cubic ABC simple matrix encryption scheme}, Post-quantum cryptography, 255--271, Lecture Notes in Comput. Sci., 10346, Springer, Cham, 2017.

\bibitem[VS17]{VS17}  {\sc Jeremy Vates, Daniel Smith-Tone}, \textit{Key recovery attack for all parameters of $HFE^-$}, Post-quantum cryptography, 272--288, Lecture Notes in Comput. Sci., 10346, Springer, Cham, 2017.

	}
\end{thebibliography}
\end{document}